\newenvironment{example}{\theoremlike{Example}}{\par\medskip}
\newcommand{\until}{{Until}\xspace}
\newcommand{\since}{{Since}\xspace}
\newcommand{\tomorrow}{{Next}\xspace}
\newcommand{\eventually}{{Eventually}\xspace}
\newcommand{\always}{{Always}\xspace}
\newcommand{\A}{\mathcal A}
\newcommand{\B}{\mathcal B}
\newcommand{\firstautomaton}{\A_1}
\newcommand{\secondautomaton}{\A_2}
\newcommand{\AP}{\mathrm{AP}}
\newcommand{\J}{J}
\newcommand{\K}{K}
\newcommand{\Until}{{\mathrel{\mathcal U}}}
\newcommand{\Since}{{\mathrel{\mathcal S}}}
\newcommand{\ie}{\textit{i.e.}\xspace}
\newcommand{\truth}{\ensuremath{v}}
\renewcommand{\phi}{\varphi}
\newcommand{\neXt}{\mathop{\mathcal X}}
\newcommand{\Always}{\mathop{\mathcal G}}
\newcommand{\Eventually}{\mathop{\mathcal F}}
\newcommand{\FO}{\mathrm{FO}\xspace}
\newcommand{\pspace}{\textsc{PSpace}\xspace}
\newcommand{\expspace}{\textsc{ExpSpace}\xspace}
\newcommand{\qzero}{q_0}
\newcommand{\qone}{q_1}
\newcommand{\qtwo}{q_2}
\newcommand{\qhole}{q_3}
\newcommand{\qfour}{q_4}
\newcommand{\qfive}{q_5}
\newcommand{\qsix}{q_6}
\newcommand{\qseven}{q_7}
\newcommand{\qeight}{q_8}
\newcommand{\qnine}{q_9}
\newcommand{\shuffle}{\mathop{sh}}
\begin{document}

\title{Automata and temporal logic over arbitrary linear time}
\author{Julien Cristau}
\affiliation{
LIAFA --- CNRS \& Université Paris 7
}

\runningtitle{Automata and temporal logic over arbitrary linear time}
\runningauthors{Julien Cristau}

\begin{abstract}
Linear temporal logic was introduced in order to reason about reactive
systems.  It is often considered with respect to infinite words, to specify
the behaviour of long-running systems.  One can consider more general models
for linear time, using words indexed by arbitrary linear orderings.  We
investigate the connections between temporal logic and automata on linear
orderings, as introduced by Bruyère and Carton.  We provide a doubly
exponential procedure to compute from any LTL formula with \until, \since, and
the Stavi connectives an automaton that decides whether that formula holds on
the input word.  In particular, since the emptiness problem for these automata
is decidable, this transformation gives a decision procedure for the
satisfiability of the logic.
\end{abstract}

\section{Introduction}

Temporal logic, in particular LTL, was proposed by Pnueli to specify the
behaviour of reactive systems~\cite{DBLP:conf/focs/Pnueli77}.  The model of
time usually considered is the ordered set of natural numbers, and executions
of the system are seen as infinite words on some set of atomic propositions.
This logic was shown to have the same expressive power as the first order
logic of order~\cite{Kamp}, but it provides a more convenient formalism to express
verification properties.  It is also more tractable: while the satisfiability
problem of $\FO$ is non-elementary~\cite{stockmeyer74}, it was shown
in~\cite{DBLP:journals/jacm/SistlaC85} that the decision problem of LTL with
\until and \since on $\omega$-words is \pspace-complete.  This logic has also
strong ties with automata, with important work to provide efficient
translations to Büchi automata, e.g.~\cite{DBLP:conf/cav/GastinO01}.

Within this time model, a number of extensions of the logic and the automata
model have been studied.  But one can also consider more general models of
time: general linear time could be useful in different settings, including
concurrency, asynchronous communication, and others, where using the set of
integers can be too simplistic.  Possible choices include ordinals, the reals,
or even arbitrary linear orderings.  In terms of expressivity, while LTL with
\until and \since is expressively complete (\ie equivalent to $\FO$) on
Dedekind-complete orderings (which includes the ordering of the reals as
well as all ordinals), this does not hold in the general case.  Two more
connectives, the future and past Stavi operators, are necessary to handle
gaps~\cite{DBLP:conf/popl/GabbayPSS80} when considering arbitrary linear
orderings.

Over ordinals, LTL with \until and \since has been shown to have a
\pspace-complete satisfiability problem~\cite{DBLP:conf/lpar/DemriR07}.  Over
the ordering of the real numbers, satisfiability of LTL with until and since
is \pspace-complete, but satisfiability of MSO is undecidable.  Over
general linear time, first order logic has been shown to be decidable, as well
as universal monadic second order logic.  Reynolds shows
in~\cite{DBLP:journals/jcss/Reynolds03} that the satisfiability problem of
temporal logic with only the \until connective is also \pspace-complete,
and conjectures that this might stay true when adding the \since
connective.  The upper bound in~\cite{DBLP:conf/lpar/DemriR07} is obtained by
reducing the satisfiability of LTL formulae to the accessibility problem in an
appropriate automata model, accepting words indexed by ordinals.
In this paper, we focus on the general case of arbitrary linear orderings,
using the full logic with \until, \since and both Stavi connectives.
Our aim is to investigate the connections between LTL and automata in this
setting.

Automata on linear orderings were introduced by Bruyère and
Carton~\cite{DBLP:conf/mfcs/BruyereC01}.  This model extends traditional finite
automata using ``limit'' transitions to handle positions with no successor
or predecessor, furthering Büchi's model of automata on words of ordinal
length~\cite{buchiordinals}.  Carton showed in~\cite{DBLP:conf/mfcs/Carton02}
that accessibility over scattered ordering is decidable in polynomial time,
and in~\cite{DBLP:journals/ijfcs/RispalC05} it was shown that these automata
can be complemented over countable scattered linear orderings.  The
accessibility result can be extended to arbitrary
orderings~\cite{cartonprcomm}.

From any formula in this logic, we define an automaton which determines
whether the formula holds on its input word.  Satisfiability of the formula is
reduced to accessibility in this automaton, and that way we get decidability
of the satisfiability problem of LTL with \until, \since and the Stavi
operators for any rational subclass.

Section~\ref{s:defs} presents some definitions about linear orderings, linear
temporal logic, and the model of automata used.
Section~\ref{s:trans} introduces our main result, an algorithm to translate
any LTL formula into a corresponding automaton.
Section~\ref{s:discuss} discusses the expressivity of the logic and automata
considered, and looks at some natural fragments.

\section{Definitions}
\label{s:defs}
\subsection{Linear orderings}

We first recall some basic definitions about orderings, and introduce some
notations.  For a complete introduction to linear orderings, the reader is
referred to~\cite{Rosenstein82}.
A \emph{linear ordering} $\J$ is a totally ordered set $(J,<)$ (considered
modulo isomorphism).
The sets of integers ($\omega$), of rational numbers ($\eta$), and of real
numbers with the usual orderings are all linear orderings.

Let $\J$ and $\K$ be two linear orderings.  One defines the reversed ordering
$-\J$ as the ordering obtained by reversing the relation $<$ in $\J$, and
the ordering $\J+\K$ as the disjoint union $J\sqcup K$ extended with $j<k$ for
any $j\in J$ and $k\in K$.
For example, $-\omega$ is the ordering of negative integers.  $-\omega+\omega$
is the usual ordering of $\mathbb{Z}$, also denoted by $\zeta$.

A non-empty subset $\K$ of an ordering $\J$ is an \emph{interval} if for any
$i<j<k$ in $J$, if $i\in \K$ and $k\in \K$ then $j\in \K$.
In order to define the runs of an automaton, we use the notion of cut.
A \emph{cut} of an ordering $\J$ is a partition $(K,L)$ of $J$ such that for
any $k\in K$ and $l\in L$, $k<l$.
We denote by $\hat\J$ the set of cuts of $\J$.  This set is equipped with the
order defined by $(K_1,L_1)<(K_2,L_2)$ if $K_1\subsetneq K_2$.  This ordering
can be extended to $\J\cup\hat\J$ in a natural way ($(K,L)<j$ iff $j\in L$).
Notice that $\hat\J$ always has a smallest and a biggest element, respectively
$c_{\min} = (\emptyset,J)$ and $c_{\max} = (J,\emptyset)$.
For example, the set of cuts of the finite ordering $\{0,1,\dots,n-1\}$ is the
ordering $\{0,1,\dots,n\}$, and the set of cuts of $\omega$ is $\omega+1$.

For any element $j$ of $\J$, there are two successive cuts $c_j^-$ and
$c_j^+$, respectively $(\{i\in J\mid i<j\},\{i\in J\mid j\le i\})$ and
$(\{i\in J\mid i\le j\},\{i\in J\mid j<i\})$.    A \emph{gap} in an ordering
$J$ is a cut $c$ which is not an extremity ($c_{\max}$ or $c_{\min}$), and has
neither a successor nor a predecessor.

Given an alphabet $\Sigma$, a \emph{word} of length $\J$ is a sequence
$(a_j)_{j\in \J}$ of elements of $\Sigma$ indexed by $\J$.
For example, 
$(ab)^\omega$ is a word of length $\omega$;
the sequence $ab^\omega ab^\omega a$ is a word of length $\omega+\omega+1$,
and
$(ab^\omega)^\omega$ is a word of length $\omega^2$.

\subsection{Temporal logic}

We use words over linear orderings to model the behaviour of systems
over linear time.  To express properties of these systems, we consider
linear temporal logic.
The set of LTL formulae is defined by the following grammar, where $p$ ranges
over a set $\AP$ of atomic propositions:
$\quad\varphi ::= p \ |\  \neg \varphi \ |\  \varphi \vee \varphi \ |\  \varphi
\Until \varphi \ |\  \varphi \Since \varphi \ |\  \varphi \Until' \varphi \ |\
\varphi \Since' \varphi
$

Besides the usual boolean operators, we have four temporal connectives.  The
$\Until$ connective is called ``\until'', and $\Since$ is called ``\since''.
$\Until'$ and $\Since'$ are respectively the future and past Stavi
connectives.  Other usual connectives such as ``\tomorrow'' ($\neXt$),
``\eventually'' ($\Eventually$), ``\always'' ($\Always$) can be defined using
these, as we see below.



These formulae are interpreted on words over the alphabet $2^{\AP}$.
A letter in those words is the set of atomic propositions that hold at the
corresponding position.
Let $x = (x_j)_{j\in \J}$ a word of length $\J$.  A formula $\varphi$ is
evaluated at a particular position $i$ in $x$; we say that $\varphi$ holds at
position $i$ in $x$, and we write $x,i\models\varphi$, using the following
semantics:
\begin{eqnarray*}
x,i\models p & \text{ if } & p\in x_i \\
x,i\models\neg\psi & \text{ if } & x,i\not\models\psi \\
x,i\models\psi_1\vee\psi_2 &\text{ if }& x,i\models\psi_1 \text{ or }
x,i\models\psi_2 \\
x,i\models\psi_1\Until\psi_2 & \text{ if } & \text{there exists $j>i$ such that
$x,j\models\psi_2$,} \\
& & \text{and for any $k$ such that $i<k<j$, we have $x,k\models\psi_1$} \\
x,i\models\psi_1\Since\psi_2 & \text{ if } & -x,i\models\psi_1\Until\psi_2
\text{ where $-x$ is the reversed word $(a_j)_{j\in -\J}$} \\
x,i\models\psi_1\Until'\psi_2 & \text{ if } & \text{there exists a gap $c\in\hat\J$
verifying three properties:} \\
& (1) & x,j\models \psi_1\text{ for any position $j$ such that $i<j<c$}\\
& (2) & \text{there is no interval starting at $c$ where $\psi_1$
is always true} \\
& & \text{(\ie $\forall c<k \ \exists c<j<k\ x,j\models\neg\psi_1$), and } \\
& (3) & \text{$\psi_2$ is always true in some interval starting at $c$} \\
x,i\models\psi_1\Since'\psi_2 & \text{ if } & -x,i\models\psi_1\Until'\psi_2
\text{ (it is the corresponding past connective)}
\end{eqnarray*}

Note that we use a ``strict'' semantic for the \until operator, contrary to a
common definition, which would be:
$$
x,i\models \psi_1\Until^{ns}\psi_2 \ \ \ \text{ if }\ \ \ \text{there exists } j\ge i \text{
such that }
x,j\models\psi_2 \text{ and } x,k\models\psi_1 \text{ for any } i\le
k<j.$$

\noindent In the strict version, the current position $i$ is not considered
for either the $\psi_1$ or the $\psi_2$ part of the definition.  Using the
strict or non-strict version makes no difference when considering
$\omega$-words, but in the case of arbitrary orderings, the strict \until is
more powerful, as noted by Reynolds in~\cite{DBLP:journals/jcss/Reynolds03}.

The formula ``\tomorrow $\varphi$'', or $\neXt\varphi$, is equivalent to
$\bot\Until\varphi$.  ``Eventually $\varphi$'', noted $\Eventually\varphi$, is
$\varphi\vee(\top\Until\varphi)$, and ``always $\varphi$'', noted
$\Always\phi$, can be expressed as $\neg(\Eventually(\neg\varphi))$.


Given a word $x$ of length $\J$, the \emph{truth word} of $\varphi$ on $x$ is
the word $\truth_{\varphi}(x)$ of length $\J$ over the alphabet $\{0,1\}$
where the position $j$ is labelled by $1$ iff $x,j\models\varphi$.  A formula
is \emph{valid} if its truth word on any input only has ones.  A formula is
\emph{satisfiable} if there exists an input word such that the truth word
contains a one.

Consider the formula $\varphi = \neg a\wedge(\Always\neg\neXt a)$, with
$\AP = \{a\}$.  If $x = (a\emptyset)^\omega$ (where $a$ stands for $\{a\}$),
then $\truth_\varphi(x) = 0^\omega$ (at every position, either $a$ is true or
$a$ is true in the successor).  On the other hand, if $x =
a\emptyset^{\omega}a\emptyset^\omega a$, then $\truth_\varphi(x) = 01^\omega
01^\omega 0$: at positions 0, $\omega$ and at the last position, $a$ is true
so the formula doesn't hold; at all other positions, $a$ is false, and there
is no position in the input word where $\neXt a$ holds.

The \emph{satisfiability problem} for a formula $\phi$ consists in deciding
whether there exists a word $w$ and a position $i$ in $w$ such that
$w,i\models \phi$.
As $\FO$ is decidable, and every LTL formula can be expressed using first
order, satisfiability of LTL is decidable.  Note however that in terms of
complexity $\FO$ is already non-elementary on finite
words~\cite{stockmeyer74}, which is not true of LTL.

\subsection{Automata}

On infinite words, Büchi automata can be used to decide satisfiability of LTL
formulae.  In the case of words over linear orderings, a model of automata has
been introduced in~\cite{DBLP:conf/mfcs/BruyereC01}.  Instead of accepting or
rejecting each input word, as in the case of $\omega$-words, we use these
automata to compute the truth words corresponding to an LTL formula.  Our
model of automata thus has an output letter on each transition, so they are
actually letter-to-letter transducers, which make composition easier
(see Section~\ref{s:composition}).

An \emph{automaton} is a tuple $\A = (Q,\Sigma,\Gamma,\delta,I,F)$ where
$Q$ is a finite set of states,
$\Sigma$ is a finite input alphabet,
$\Gamma$ is a finite output alphabet,
$I$ and $F$ are subsets of $Q$, respectively the set of initial and
final states,
and $\delta\subseteq (Q\times\Sigma\times\Gamma\times Q) \cup (2^{Q}\times Q)
\cup (Q\times 2^{Q})$ is the set of transitions.  We note:
\begin{itemize}
\item $p\xrightarrow{a|b}q$ if $(p,a,b,q)\in\delta$ (\emph{successor}
transition)
\item $P\rightarrow q$ if $(P,q)\in\delta$ (\emph{left limit} transition)
\item $q\rightarrow P$ if $(q,P)\in\delta$ (\emph{right limit} transition).
\end{itemize}

Consider a word $x = (q_j)_{j\in\J}$ over $Q$.
We define the left and right limit sets of $x$ at position $j\in J$ as the
sets of labels that appear arbitrarily close to $j$ (respectively to its left
and to its right).  Formally:
\begin{center}
$\lim_{j^-}x = \{q\in Q\mid\forall k<j\ \exists i\ k<i<j\wedge q_i =q\}$\\
$\lim_{j^+}x = \{q\in Q\mid\forall k>j\ \exists i\ j<i<k\wedge q_i =q\}$
\end{center}
Note that $\lim_{j^-}x$ is non-empty if and only if the transition to $j$ is a
left limit, and similarly for $\lim_{j^+}x$ if the transition from $j$ is a
right limit.  These sets help define the possible limit transitions in a run.

Given an automaton $\A$, an accepting run of $\A$ on a word $x =
(x_j)_{j\in\J}$ is a word $\rho$ of length $\hat\J$ over $Q$ such that:
\begin{itemize}
\item $\rho_{c_{\min}}\in I$ and $\rho_{c_{\max}}\in F$;
\item for each $i\in J$, there exists $y_i\in \Gamma$ such that
$\rho_{c_i^-}\xrightarrow{x_i|y_i} \rho_{c_i^+}$;
\item if $c\in\hat\J$ has no predecessor,
$\lim_{c^-}\rho\rightarrow\rho_c$, and if $c\in\hat\J$ has no successor, $\rho_c\rightarrow \lim_{c^+}\rho$.
\end{itemize}

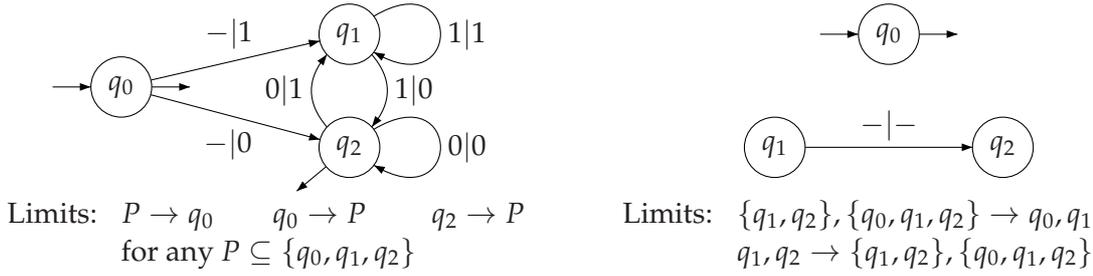
\begin{figure}
\begin{center}
\begin{picture}(80,32)(0,-32)
\node[Nmarks=if](A)(10,-7){$q_0$}
\node(B)(40,0){$q_1$}
\node[Nmarks=f,fangle=220](C)(40,-15){$q_2$}
\drawedge(A,B){$-|1$}
\drawedge[ELside=r](A,C){$-|0$}
\drawloop[loopangle=0](B){$1|1$}
\drawedge[curvedepth=5](B,C){$1|0$}
\drawedge[curvedepth=5](C,B){$0|1$}
\drawloop[loopangle=0](C){$0|0$}

\put(-5,-25){Limits:}
\put(10,-25){$P\rightarrow q_0 \qquad q_0\rightarrow P$ \qquad $q_2\rightarrow
P$}
\put(10,-30){for any $P\subseteq \{q_0,q_1,q_2\}$}
\end{picture}
\begin{picture}(60,32)(0,-32)

\node[Nmarks=if](A)(30,0){$q_0$}
\node(B)(15,-15){$q_1$}
\node(C)(45,-15){$q_2$}
\drawedge(B,C){$-|-$}

\put(-5,-25){Limits:}
\put(10,-25){$\{q_1,q_2\},\{q_0,q_1,q_2\}\rightarrow q_0,q_1$}
\put(10,-30){$q_1,q_2\rightarrow\{q_1,q_2\},\{q_0,q_1,q_2\}$}

\end{picture}
\end{center}
\caption{Example automata}
\label{fig:automata}
\end{figure}

%
%
%
%

\begin{example}
The first automaton in Figure~\ref{fig:automata} outputs $1$ at each position
immediately followed by a $1$ in the input word, and $0$ at other positions.

The second automaton accepts input words whose length is a linear ordering
without first or last element, and without two consecutive elements (\ie
\emph{dense} orderings).  The notation $P\rightarrow q_0,q_1$ means that there
is a transition $P\rightarrow q_0$ and a transition $P\rightarrow q_1$.
\end{example}

In~\cite{DBLP:conf/mfcs/Carton02}, Carton proves that the accessibility
problem on these automata can be solved in polynomial time, when only
considering scattered orderings.  This result can be extended to
arbitrary orderings~\cite{cartonprcomm} as it is done for rational expressions
in~\cite{DBLP:journals/ijfcs/BesC06}.  The idea is to build an automaton
over finite words which simulates the paths in the initial automaton and
remembers their contents.  In order to handle the general case (as opposed to
only scattered orderings), the added operation is called ``shuffle'':
$\shuffle(w_1,\dots,w_n) = \Pi_{j\in J} x_j$ where $J$ is a dense and complete
ordering without a first or last element, partitioned in dense suborderings
$J_1\dots J_n$, such that $x_j = w_i$ if $j\in J_i$.  Looking at automata,
this means that if there are paths from $p_1$ to $q_1$ with content $P_1$,
\dots, from $p_n$ to $q_n$ with content $P_n$, and transitions from
$P_1\cup \dots \cup P_n$ to each $p_i$, transitions from each $q_i$ to
$P_1\cup\dots\cup P_n$, a transition from $p$ to $P_1\cup\dots\cup P_n$ and a
transition from $P_1\cup\dots\cup P_n$ to $q$, then there is a path from $p$
to~$q$.

\section{Translation between formulae and automata}
\label{s:trans}

Over $\omega$-words, problems on temporal logics are commonly solved
using tableau methods~\cite{Wolper85thetableau}, or automata-based
techniques~\cite{DBLP:conf/lics/VardiW86}.  In this work we extend the
correspondence between LTL and automata to words over linear orderings.  Our
main result is Theorem~\ref{th:main}.

\begin{theorem}
\label{th:main}
For every LTL formula $\varphi$, there is an automaton $\A_{\varphi}$ which
given any input word $x$ outputs the truth word $\truth_\varphi(x)$.
\end{theorem}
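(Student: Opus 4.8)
The plan is to proceed by structural induction on $\varphi$, exploiting the fact that our automata are letter-to-letter transducers and therefore compose. Concretely, I would first establish the two closure lemmas that form the content of Section~\ref{s:composition}: a \emph{product} construction, which from transducers computing the length-preserving maps $f_1\colon \Sigma^{\J}\to\Gamma_1^{\J}$ and $f_2\colon\Sigma^{\J}\to\Gamma_2^{\J}$ builds a transducer computing the paired map $x\mapsto(f_1(x),f_2(x))$ into $(\Gamma_1\times\Gamma_2)^{\J}$; and a \emph{sequential composition}, which from transducers for $f\colon\Sigma^{\J}\to\Gamma^{\J}$ and $g\colon\Gamma^{\J}\to\Delta^{\J}$ builds one for $g\circ f$. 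Here ``computing $f$'' must mean that on every input word, over every linear ordering $\J$, at least one accepting run exists and every accepting run outputs exactly $f(x)$; both totality and this functionality are needed so that the pieces glue. Each construction is a product on the state sets in which successor transitions are synchronised letter by letter and, crucially, limit transitions are synchronised through the left and right limit sets $\lim_{c^-}\rho$ and $\lim_{c^+}\rho$, so that the component runs agree at gaps and at positions reached by a limit.

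With these tools the induction is short wherever no future- or past-referring behaviour is involved. For the base case $\varphi=p$ I would take a one-state transducer reading $2^{\AP}$ that outputs $1$ exactly when $p$ belongs to the current letter. For $\neg\psi$ I relabel the output of $\A_\psi$, swapping $0$ and $1$; for $\psi_1\vee\psi_2$ I take the product of $\A_{\psi_1}$ and $\A_{\psi_2}$ and relabel each output pair to its disjunction. The temporal cases reduce to the same pattern once we observe that the truth value of a formula $\psi_1\star\psi_2$ at a position depends only on the truth words $\truth_{\psi_1}(x)$ and $\truth_{\psi_2}(x)$, never on $x$ directly. Hence for each binary connective $\star$ it suffices to build one fixed \emph{base} transducer $\B_\star$ over input alphabet $\{0,1\}^2$ that outputs the truth word of $\psi_1\star\psi_2$ from the pair of truth words, and to set $\A_{\psi_1\star\psi_2}=\B_\star\circ(\A_{\psi_1}\times\A_{\psi_2})$. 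Moreover, since $\Since$ and $\Since'$ are defined by reversing the word, I would obtain $\B_{\Since}$ and $\B_{\Since'}$ from $\B_{\Until}$ and $\B_{\Until'}$ by the mirror construction that swaps initial and final states, reverses successor transitions, and exchanges left and right limit transitions; so only $\B_{\Until}$ and $\B_{\Until'}$ need be built from scratch.

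The real work is therefore the design of these two base transducers and the proof that each is total and functional over every linear ordering. For $\B_{\Until}$ the output bit at a position $i$ must be $1$ exactly when some later position carries a $1$ in the second component with the first component holding throughout the open interval in between; the states need only carry enough information to certify or refute this eventuality, the delicate points being the positions approached by a limit and those on either side of a gap, where an obligation must be propagated or discharged by a limit transition rather than a successor transition. The hard part will be $\B_{\Until'}$, the future Stavi operator, whose semantics quantify over gaps and include the subtle condition~(2) that no interval starting at the gap keeps $\psi_1$ true. Here I would use right limit transitions to recognise, at each gap $c$, whether just to the right of $c$ the second component is constantly $1$ on some initial interval while the first component fails to be, thereby capturing conditions~(2) and~(3), together with condition~(1) that the first component holds on the open interval from $i$ to $c$. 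Ensuring simultaneously that such a run exists for \emph{every} input ordering (totality) and that the limit transitions \emph{force} the correct output bit at each position (functionality) is the crux of the argument, and is exactly where the expressive power of the limit transitions, and the shuffle phenomenon underlying accessibility, are genuinely used.
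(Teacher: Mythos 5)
Your architecture is exactly the paper's: structural induction, a product and a composition of letter-to-letter transducers synchronised on limit sets, one base transducer per connective over input alphabet $\{0,1\}^2$, and the past connectives $\Since$, $\Since'$ obtained from $\Until$, $\Until'$ by reversing all transitions and swapping initial and final states. Your inductive invariant (totality plus output-functionality of every accepting run) is even slightly weaker than the paper's (exactly one accepting run per input, \ie non-ambiguity), and it does suffice, since the limit set of a product run projects onto the limit sets of the component runs (finiteness of the state sets gives $\pi_i(\lim_{c^\pm}\rho)=\lim_{c^\pm}\pi_i(\rho)$), so accepting runs of a product or composition are exactly the compatible pairs of accepting runs of the components.

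However, there is a genuine gap: the two base transducers $\B_{\Until}$ and $\B_{\Until'}$, which you correctly identify as the crux, are never constructed --- you state what their runs must certify at gaps and limit cuts, but give no states, no transitions, and no argument that a correct run exists for every input over every linear ordering, nor that the limit transitions force the right output. This is where essentially all of the paper's proof effort goes. For $\Until$ the paper exhibits a five-state automaton whose states classify each cut $c$ by the local situation ($q_0$: next letter is $(1,1)$; $q_1$: next letter $(0,1)$; $q_3$: next letter $(0,0)$; $q_2$: $\phi\Until\psi$ true at $c$ with a $(1,0)$-block ahead; $q_4$: false at $c$), with carefully chosen limit transitions such as $\{q_2\}\rightarrow q_0,q_1,q_2$ and $q_4\rightarrow P$ only if $q_1$ or $q_3\in P$ or $P=\{q_4\}$; the delicate part of the correctness lemma is uniqueness, \ie showing that at a cut where the input alone does not decide between $q_2$ and $q_4$ (both read $(1,0)$), the acceptance condition and the limit transitions nonetheless pin down the label --- e.g.\ a $q_2$-labelled cut must eventually exit through $q_0$ or $q_1$ or through a right limit $q_2\rightarrow\{q_0\},\{q_2\},\{q_0,q_2\}$. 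For $\Until'$ the paper needs a ten-state automaton in which the gap witnessing $\phi\Until'\psi$ is itself a state ($\qhole$, reachable only by a left limit from $P\subseteq\{\qzero,\qone,\qtwo\}$ and leaving only by a right limit to sets $P$ with $\qfive\in P$ and $P\cap\{\qone,\qfour,\qsix\}=\emptyset$, encoding your conditions (2) and (3)), plus two distinct ``false'' terminal states $\qeight$, $\qnine$ distinguished by whether $\phi$ fails or holds on an interval to the left --- a distinction your sketch does not anticipate and which is needed for the run to exist and be unique on arbitrary orderings. Until these automata are written down and the existence and uniqueness of their runs verified case by case on successor, left-limit and right-limit transitions, the induction has no base for the temporal connectives, so as it stands the proposal is a correct plan rather than a proof.
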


Moreover, this automaton $\A_{\varphi}$ can be effectively computed, and has a
number of states exponential in the size of $\phi$.  Because we can compute
the product of $\A_{\phi}$ with any given automaton and check for its
emptiness, we get Corollary~\ref{cor:sat}, which states that given a temporal
formula and a rational property (\ie an automaton on words over linear
orderings), we can check whether there exists a model of the formula which is
accepted by the automaton.

\begin{corollary}
\label{cor:sat}
The satisfiability problem for any rational subclass is decidable.
\end{corollary}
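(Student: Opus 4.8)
The plan is to reduce satisfiability to an emptiness test on a product automaton, chaining Theorem~\ref{th:main} with the decidability of accessibility for these automata. Recall that $\varphi$ is satisfiable in a class exactly when some word $x$ of that class admits a position $i$ with $x,i\models\varphi$, that is, when the truth word $\truth_\varphi(x)$ contains at least one $1$. So the whole question is whether some word of the class yields a truth word with a $1$.

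First I would invoke Theorem~\ref{th:main} to obtain the transducer $\A_\varphi$, which on any input $x$ outputs $\truth_\varphi(x)$. To test for the presence of a $1$ in that output, I would introduce a small automaton $\mathcal D$ over the alphabet $\{0,1\}$ recognising the words that contain at least one $1$: two states, ``no $1$ seen'' and ``a $1$ seen'', with successor transitions moving to the second state on reading $1$ and staying there afterwards, and limit transitions that simply propagate the monotone flag (the accepting state is reached at a limit cut iff it already occurs in the corresponding limit set). Composing $\A_\varphi$ with $\mathcal D$, feeding the output of $\A_\varphi$ as the input of $\mathcal D$ by the composition of transducers of Section~\ref{s:composition}, gives an automaton $\A'$ over $2^{\AP}$ whose accepted words are exactly the models of $\varphi$.

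Next, the rational subclass is presented as an automaton $\B$ over $2^{\AP}$. I would form the synchronised product $\A'\times\B$, whose accepted words are precisely those that are simultaneously models of $\varphi$ and recognised by $\B$. Consequently $\varphi$ has a model in the class if and only if $\A'\times\B$ accepts some word, \ie if and only if a final state is accessible in $\A'\times\B$. Carton's polynomial-time accessibility algorithm for scattered orderings, extended to arbitrary orderings as cited in~\cite{cartonprcomm}, decides this, which settles the corollary.

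The hard part will be the product and composition constructions themselves, and specifically getting the limit transitions of $\A'\times\B$ right: a limit set of a run in the product must project coherently onto limit sets of the two components, so that $\lim_{c^-}$ and $\lim_{c^+}$ behave correctly through gaps, and the monotone ``seen a $1$'' flag of $\mathcal D$ must be threaded consistently across positions with no predecessor or successor. Once composition is in place, however, the corollary is a direct composition of Theorem~\ref{th:main} with the cited accessibility result.
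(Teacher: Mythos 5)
Your overall strategy is the paper's own: apply Theorem~\ref{th:main} to get $\A_\varphi$, take the product with the automaton $\B$ presenting the rational class, and reduce to the accessibility test of~\cite{DBLP:conf/mfcs/Carton02,cartonprcomm}. The paper, however, never builds a ``contains a $1$'' acceptor: it directly checks, in $\A_\varphi\times\B$, whether some transition on which $\A_\varphi$ outputs $1$ is accessible \emph{and} co-accessible, which is just two accessibility queries. Your detour through the two-state automaton $\mathcal D$ is where the proof breaks, because a \emph{deterministic monotone flag cannot be propagated through right-limit transitions}. At a cut with no successor, the transition $q\rightarrow P$ only relates the state at the cut to the right limit set; nothing ties the flag $s_1$ to an actual occurrence of a $1$. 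Concretely: the deterministic semantics of your flag forces the right-limit transition $s_0\rightarrow\{s_1\}$ (to handle inputs where $1$s occur coinitially after a gap, every later cut is then forced to $s_1$). But with that transition, on the all-zero word of length $-\omega$ the run labelling $c_{\min}$ with $s_0$ and every other cut with $s_1$ is accepting --- the loops $s_1\xrightarrow{0}s_1$, the limit transitions $\{s_1\}\rightarrow s_1$ and $s_1\rightarrow\{s_1\}$, and the final state $s_1$ all check out --- so $\mathcal D$ accepts a word with no $1$. If instead you omit $s_0\rightarrow\{s_1\}$, then on the word of length $-\omega$ carrying $1$ at every position, every cut other than $c_{\min}$ is forced to $s_1$ by the successor transitions, the right limit set at $c_{\min}$ is $\{s_1\}$, and no run exists, so $\mathcal D$ rejects a word full of $1$s. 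Either way, your $\A'$ does not accept exactly the models of $\varphi$.

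The fix is easy: make $\mathcal D$ nondeterministic and let it \emph{guess} the witnessing occurrence --- $s_0$ loops on both letters, one extra transition $s_0\xrightarrow{1}s_1$, $s_1$ loops on both letters, with limit transitions $\{s_0\}\rightarrow s_0$, $P\rightarrow s_1$ if $s_1\in P$, $s_0\rightarrow\{s_0\}$ and $s_1\rightarrow\{s_1\}$. Then the run stays in $s_0$ through any $1$s preceding the guessed one, so the right limit set at a gap before the guess is $\{s_0\}$ and both counterexamples above are handled; alternatively, simply adopt the paper's check on a $1$-output transition and drop $\mathcal D$ altogether. Two smaller remarks: the ``hard part'' you announce (limit transitions of product and composition) is already settled in Section~\ref{s:composition}, so the genuine subtlety of this corollary is exactly the gap behaviour of your flag automaton; and your final reduction should be read as existence of a complete accepting run (\ie a final state reachable \emph{at $c_{\max}$} from an initial state at $c_{\min}$), which is what the cited accessibility result decides --- the paper phrases this as the marked transition being both accessible and co-accessible.
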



The idea is to build $\A_{\varphi}$ by induction on the formula.  We construct
an elementary automaton for each logical connective. We use composition and
product operations to build inductively the automaton of any LTL formula from
elementary automata.  All automata used in this proof have the particular
property that there exists exactly one accepting run for each possible input
word, \ie they are non-deterministic, but also non-ambiguous.  This property
is preserved by composition and product.

The structure of the proof is the following: we define the composition and
product operators on automata, then we present the elementary automata that are
needed to encode logical connectives. Finally, we give the inductive method to
build the automaton corresponding to a formula from elementary ones.


\subsection{Product, composition and elementary automata}
\label{s:composition}

Let $\firstautomaton = (Q_1,\Sigma,\Gamma,\delta_1,I_1,F_1)$ and
$\secondautomaton = (Q_2,\Sigma',\Delta,\delta_2,I_2,F_2)$ be two automata.
The product consists in running both automata with the same input alphabet in
parallel, and outputting the combination of their outputs.  If
$\firstautomaton$'s output alphabet and $\secondautomaton$'s input alphabet
are the same, the composition consists in running $\secondautomaton$ over
$\firstautomaton$'s output.  We use the notation $\pi_1(a,b) = a$ and
$\pi_2(a,b) = b$ for the first and second projections.

\begin{definition}
Suppose that $\firstautomaton$ and $\secondautomaton$ have the same input
alphabet, \ie $\Sigma = \Sigma'$.
The \emph{product} of $\firstautomaton$ and $\secondautomaton$ is the
automaton $\firstautomaton\times\secondautomaton = (Q_1\times
Q_2,\Sigma,\Gamma\times\Delta,\delta,I_1\times I_2,F_1\times F_2)$, where
$\delta$ contains the following transitions:
\begin{itemize}
\item $(q_1,q_2)\xrightarrow{a|b,c}(q'_1,q'_2)$ if $q_1\xrightarrow{a|b}q'_1$
and $q_2\xrightarrow{a|c}q'_2$,
\item $(q_1,q_2)\rightarrow P$ if $q_1\rightarrow \pi_1(P)$ and $q_2\rightarrow
\pi_2(P)$,
\item $P\rightarrow (q_1,q_2)$ if $\pi_1(P)\rightarrow q_1$ and
$\pi_2(P)\rightarrow q_2$.
\end{itemize}
\end{definition}

\begin{definition}
Suppose now that the output alphabet of $\firstautomaton$ is the input
alphabet of $\secondautomaton$, \ie $\Gamma = \Sigma'$.
The \emph{composition}
of $\firstautomaton$ and $\secondautomaton$ is the automaton
$\secondautomaton\circ\firstautomaton = (Q_1\times Q_2, \Sigma, \Delta,
\delta, I_1\times I_2, F_1\times F_2)$.  The transitions in $\delta$ are:
\begin{itemize}
\item $(q_1,q_2)\xrightarrow{a|c}(q'_1,q'_2)$ if $q_1\xrightarrow{a|b}q'_1$
and $q_2\xrightarrow{b|c}q'_2$,
\item $(q_1,q_2)\rightarrow P$ if $q_1\rightarrow \pi_1(P)$ and
$q_2\rightarrow \pi_2(P)$,
\item $P\rightarrow (q_1,q_2)$ if $\pi_1(P)\rightarrow q_1$ and
$\pi_2(P)\rightarrow q_2$.
\end{itemize}
\end{definition}



Recall that LTL formulae are given by $\phi := p\ |\  \neg\phi\ |\
\phi\vee\phi\ |\  \phi\Until\phi\ |\ \phi\Until'\phi\ |\ \phi\Since\phi\ |\
\phi\Since'\phi$.  For each atomic proposition $p$ we construct an automaton
$\A_p$ which, given a word $x$, outputs $\truth_p(x)$.  For each logical
connective of arity $n$, we construct an automaton with input alphabet
$\{0,1\}^n$, and output alphabet $\{0,1\}$.  The input word is the tuple of
truth words of the connective's variables, the output is the truth word of the
complete formula.  For temporal connectives, we only describe the automata
corresponding to $\Until$ and $\Until'$.  For the ``past'' connectives, the
automata are the same with all transitions (successor and limits) reversed,
and initial and final states swapped.

For any $p\in\AP$, the automaton $\A_p$ is
$(\{q\},2^{\AP},\{0,1\},\delta,\{q\},\{q\})$ where $\delta =
\{(q\xrightarrow{a|0}q\mid p\not\in a\}\cup \{q\xrightarrow{a|1}q\mid p\in
a\}\cup\{q\rightarrow \{q\},\{q\}\rightarrow q\}$.  This automaton simply
outputs $1$ at positions where $p$ is true, and $0$ everywhere else.  Note
that the run is uniquely determined by the input word; such a transducer is
called non-ambiguous.

Figures~\ref{fig:not} and~\ref{fig:or} show the automata corresponding to the
negation ($\neg$) and disjunction ($\vee$) connectives.  Their limit
transitions are $\{q\}\rightarrow q$ and $q\rightarrow\{q\}$.  Again, these
automata admit exactly one run for each input word.

\begin{figure}[h]
\begin{center}
\subfigure[Automaton for negation]{
{
\begin{picture}(50,18)(0,-18)
\node[Nmarks=if,iangle=135,fangle=-45](n1)(25,-9.0){$q$}
\drawloop[loopangle=-180](n1){$0|1$}
\drawloop[loopangle=0](n1){$1|0$}
\end{picture}
\label{fig:not}
}}
\subfigure[Automaton for disjunction]{
\begin{picture}(50,18)(0,-18)\nullfont
\node[Nmarks=if,iangle=135,fangle=-45](n1)(25,-9.0){$q$}
\drawloop[loopangle=180](n1){$0,0|0$}
\drawloop[loopangle=0](n1){$0,1|1$ $1,0|1$ $1,1|1$}
\end{picture}
\label{fig:or}
}
\end{center}
\end{figure}



\subsection{Automaton for $\Until$}

The difficulty starts with the ``\until'' connective ($\Until$).  We recall
that $\phi\Until\psi$ holds at position $i$ in a word $w$ if there exists
$j>i$ such that $\psi$ holds at $j$, and such that $\phi$ holds at every
position $k$ such that $i<k<j$.

We build an automaton $\A_{\Until}$ with input alphabet $\{0,1\}²$
(corresponding to the truth value of $\phi$ and $\psi$ at each position), and
output alphabet $\{0,1\}$.  On an input word of the form
$(\truth_{\varphi}(w),\truth_{\psi}(w))$ for some word $w$, we want the output
to be $\truth_{\varphi\Until\psi}(w)$.
Let $J = |w|$, and $c\in \hat J$.  We have five different situations.  For
each of these cases the figure shows an example, with ``$|$''
representing the cut $c$, and each $\bullet$ representing a position in the input
word.

\begin{enumerate}
\setcounter{enumi}{-1}
\item $c$ is followed by a position where $\varphi$ and $\psi$ are true.%
\hfill$
\overset{\text{input}}{\underset{\text{output}}{}}\quad \cdots
\underset{1}{\bullet} 
{|} \overset{1,1}{\bullet} \cdots
$
\item $c = c_j^-$, and $j$ is such that $\varphi$ is false and $\psi$ is
true
.
\hfill$
\overset{\text{input}}{\underset{\text{output}}{}}\quad \cdots
\underset{1}{\bullet} 
{|} \overset{0,1}{\bullet} \cdots
$

\item other cases where $\varphi\Until\psi$ is true at $c$.
\hfill$
\overset{\text{input}}{\underset{\text{output}}{}}\quad \cdots
\underset{1}{\bullet} 
{|} \overset{1,-}{\overbrace{\cdots}}
\overset{-,1}{\bullet} \cdots
$

\item $c$ is followed by a position where both $\varphi$ and $\psi$ are false.
\hfill$
\overset{\text{input}}{\underset{\text{output}}{}}\quad \cdots
\underset{0}{\bullet} 
{|} \overset{0,0}{\bullet} \cdots
$

\item other cases where $\varphi\Until\psi$ is false at $c$.  If $c = c_j^-$
then the input at position $j$ is $(1,0)$.
\[
\overset{\text{input}}{\underset{\text{output}}{}}\quad \cdots
\underset{0}{\bullet} 
{|} \overset{1,0}{\overbrace{\cdots}}
\overset{0,0}{\bullet} \cdots
\qquad \qquad
\cdots \underset{0}{\bullet} {|} \overset{1,0}{\overbrace{\cdots}}
\overset{\{(1,-),(0,1)\}}{\overbrace{\cdots\cdots}} \cdots
\]

\end{enumerate}

The structure of the automaton $\A_{\Until}$ and the limit transitions are
given by Figure~\ref{fig:until}.  This automaton has five states $q_0$ to
$q_4$ corresponding to the situations described above.  Given any two states
$q$ and $q'$ there exists a transition $q\rightarrow q'$ except from $q_2$
to $q_3$ or $q_4$ and from $q_4$ to $q_0$, $q_1$ or $q_2$.  The input label of
successor transitions is determined by the origin node: $(1,1)$ for $q_0$,
$(0,1)$ for $q_1$, $(0,0)$ for $q_3$, and $(1,0)$ for $q_2$ and $q_4$.  The
output label is $1$ on transitions leading to $q_0$, $q_1$ or $q_2$, and $0$
on transitions leading to $q_3$ or $q_4$.  All states are initial, while $q_4$ is the
only final state.

%
%

\begin{figure}
\begin{tabular}{m{9cm}m{5cm}}{
\setlength{\unitlength}{.3ex}
\begin{picture}(170,120)(0,-150)\nullfont
\node[NLangle=0.0,Nmarks=i](A)(40.0,-52.0){$q_0$}
\node[NLangle=0.0,Nmarks=i,iangle=0](B)(116.0,-52.0){$q_1$}
\node[NLangle=0.0,Nmarks=i,iangle=0](C)(148.0,-96.0){$q_2$}
\node[NLangle=0.0,Nmarks=i,iangle=200](D)(80.0,-122.0){$q_3$}
\node[NLangle=0.0,Nmarks=if,fangle=180,iangle=135](E)(20.0,-96.0){$q_4$}
\drawloop(A){$1,1/1$}
\drawloop(B){$0,1/1$}
\drawloop[loopangle=270](C){$1,0/1$}
\drawloop[loopangle=270](D){$0,0/0$}
\drawloop[loopangle=270](E){$1,0/0$}
\drawedge[curvedepth=6.0](A,B){$1,1/1$}
\drawedge[curvedepth=6.0](A,C){}
\drawedge[curvedepth=6.0](A,D){}
\drawedge[ELside=r](A,E){$1,1/0$}
\drawedge(B,E){}
\drawedge[ELside=r](D,C){$0,0/1$}
\drawedge[curvedepth=2.0](B,A){}
\drawedge[curvedepth=6.0](B,C){$0,1/1$}
\drawedge[curvedepth=6.0](B,D){}
\drawedge[curvedepth=6.0](C,A){}
\drawedge[curvedepth=4.0](C,B){}
\drawedge[curvedepth=6.0](D,A){}
\drawedge[curvedepth=6.0](D,B){}
\drawedge[curvedepth=6.0](D,E){$0,0/0$}
\drawedge[curvedepth=4.0](E,D){}
\end{picture}
}&
\begin{minipage}{4cm}
{
\input{table_until}
}
\end{minipage}
\end{tabular}
\caption{Automaton for $\Until$}
\label{fig:until}
\end{figure}

\begin{lemma}
Let $\varphi$ and $\psi$ two formulae.  Let $x$ and $y$ be the truth words of
$\varphi$ and $\psi$ on a word $w$ of length $J$.  The output of $\A_{\Until}$
on $(x,y)$ is the truth word of $\varphi\Until\psi$ on $w$.
\end{lemma}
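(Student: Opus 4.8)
The plan is to exhibit a single run $\rho$ of $\A_{\Until}$ on the input $(x,y)$, prove that it is accepting, and compute its output; since $\A_{\Until}$ is non-ambiguous, this run is the only one and the output it produces is \emph{the} output of the automaton. For a cut $c\in\hat J$ I write $U(c)$ for the assertion that there is a position $j>c$ with $y_j=1$ and $x_k=1$ for every position $k$ with $c<k<j$; this is exactly the strict $\Until$ semantics read at the cut $c$. I then label each cut by the unique state dictated by the five cases preceding the lemma: $q_0,q_1,q_3$ when $c$ has an immediate successor position carrying input $(1,1)$, $(0,1)$, $(0,0)$; the state $q_2$ when $U(c)$ holds but $c$ is neither of the first two cases (i.e.\ the successor carries $(1,0)$, or $c$ has no successor); and $q_4$ when $U(c)$ fails and $c$ is not the third case. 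The two invariants, immediate from these definitions, are that $\rho_c\in\{q_0,q_1,q_2\}$ iff $U(c)$ holds, and that the first input coordinate read after $c$ is fixed by $\rho_c$ exactly as in Figure~\ref{fig:until}.

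First I would clear the boundary and successor conditions. Every state is initial, so $\rho_{c_{\min}}\in I$ is free; and $c_{\max}$ has no position above it and no immediate successor, so $U(c_{\max})$ fails and $\rho_{c_{\max}}=q_4\in F$. For the successor edges the only fact needed is the one-step recurrence $U(c_i^-)\iff\bigl(y_i=1\text{ or }(x_i=1\text{ and }U(c_i^+))\bigr)$, got by separating the witness $j=i$ from witnesses $j>i$. Running through the four possible inputs at $i$ shows that $(\rho_{c_i^-},\rho_{c_i^+})$ always realises an allowed transition with the correct input label, and that the two families of \emph{missing} edges, $q_2\to q_3,q_4$ and $q_4\to q_0,q_1,q_2$, are precisely the combinations the recurrence forbids: the input $(1,0)$ attached to $q_2$ and $q_4$ gives $U(c_i^-)\iff U(c_i^+)$. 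The output read on the edge into $\rho_{c_i^+}$ is $1$ exactly when $\rho_{c_i^+}\in\{q_0,q_1,q_2\}$, i.e.\ when $U(c_i^+)$ holds, which is the truth value of $\varphi\Until\psi$ at position~$i$.

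The substance of the argument, and the step I expect to be the main obstacle, is the verification of the limit transitions, where the local state at $c$ must be reconciled with the \emph{cofinal} behaviour of $\varphi$ and $\psi$ recorded in $\lim_{c^-}\rho$ and $\lim_{c^+}\rho$. I would establish two matching facts. For a left-limit cut $c$: if $q_2$ occurs cofinally below $c$ then every position sufficiently close to $c$ from below carries $y=0$ (otherwise a state $q_0$ or $q_1$ would be cofinal), so any witness for some $U(d)$ with $d<c$ must lie above $c$ and hence also witnesses $U(c)$, giving $\rho_c\in\{q_0,q_1,q_2\}$; symmetrically $\lim_{c^-}\rho=\{q_4\}$ forces $U(c)$ to fail. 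The crux is that $\lim_{c^-}\rho$ can never equal $\{q_2,q_4\}$: these two states force $x=1$ and $y=0$ on a whole interval ending at $c$, and on such an interval $U$ is constant, so only one of $q_2,q_4$ can be cofinal. For a right-limit cut $c$, only $q_2$ and $q_4$ carry outgoing limit transitions, matching the absence of an immediate successor; here I would show $\lim_{c^+}\rho\subseteq\{q_0,q_2\}$ when $U(c)$ holds (on the interval up to the witness $\varphi$ stays true and $U$ stays true, excluding $q_1,q_3,q_4$, and the single cut just before the witness is bounded away from $c$), and that a failing $U(c)$ forces either a $\varphi$-false or a $\psi$-true position cofinally above $c$, whence $q_1$ or $q_3\in\lim_{c^+}\rho$, or else $x=1,y=0$ throughout a suffix with $U$ false, giving $\{q_4\}$. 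These are exactly the families listed in Figure~\ref{fig:until}.

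Finally I would assemble the pieces: $\rho$ satisfies the initial, successor, and limit conditions, so it is an accepting run, and by the second paragraph its output word is $\truth_{\varphi\Until\psi}(w)$. Because $\A_{\Until}$ is non-ambiguous, $\rho$ is the unique accepting run; I would additionally note that the case labelling is in fact \emph{forced} by the transition relation, which both re-proves non-ambiguity and confirms that this output is the one computed by the automaton, completing the proof.
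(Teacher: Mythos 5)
Your construction of the run and the verification of its output follow the paper's own proof almost step for step: the labelling you define via $U(c)$ is exactly the paper's canonical run $\rho$, your one-step recurrence $U(c_i^-)\iff(y_i=1\vee(x_i=1\wedge U(c_i^+)))$ is a clean packaging of the paper's successor case analysis, and your limit analysis matches the paper's. Indeed you are more explicit than the paper on one point: that $\lim_{c^-}\rho$ can never equal $\{q_2,q_4\}$ (the one left-limit set with no outgoing transition), which the paper only establishes implicitly through its dichotomy on whether $\psi$ holds arbitrarily close to the left of $c$. Up to and including the computation of the output word, your argument is sound.

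The genuine gap is the uniqueness step. Your opening sentence invokes non-ambiguity of $\A_{\Until}$ to conclude that $\rho$ is \emph{the} run, but non-ambiguity is not available at that point: the paper's blanket remark that all its automata are non-ambiguous is a proof obligation, discharged for $\A_{\Until}$ precisely inside this lemma's proof, so using it here is circular. Your closing remark that ``the case labelling is in fact forced by the transition relation'' does not repair this, because the labelling is \emph{not} locally forced: $q_2$ and $q_4$ read the same input $(1,0)$, all states are initial, and at cuts without a successor either label is locally consistent, so a block of $(1,0)$-positions admits both an all-$q_2$ and an all-$q_4$ labelling compatible with every successor and limit transition along the block. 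What excludes the wrong one is global: $q_4$ is the only final state, the edges $q_2\to q_3,q_4$ and $q_4\to q_0,q_1,q_2$ are missing, and the limit transitions out of $q_2$, $\{q_2\}$ and $\{q_4\}$ constrain how a block can end --- for instance on input $(1,0)^\omega$ the all-$q_2$ run violates no local constraint and is ruled out only because no transition from $\{q_2\}$ reaches the final state. The paper devotes a full paragraph to this: given any run $\gamma$ with $\gamma(c)=q_2$, it traces the first exit from the $q_2$-block (necessarily into $q_0$ or $q_1$, or via a right limit $q_2\to\{q_0\}$ or $q_2\to\{q_0,q_2\}$) to extract a witness showing $U(c)$ holds, and argues symmetrically for $q_4$. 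Without this argument the phrase ``the output of $\A_{\Until}$'' in the lemma is not even well defined, and the non-ambiguity needed later for composition and product is unproven; this step must be carried out, not relegated to a note.
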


\begin{proof}
Let $\rho$ be the word of length $\hat\J$ on $Q$ defined by
\begin{itemize}
\item if $x_j = y_j = 1$, then $\rho(c_j^-) = q_0$;
\item if $x_j = 0$ and $y_j = 1$ then $\rho(c_j^-) = q_1$;
\item if $x_j = y_j = 0$ then $\rho(c_j^-) = q_3$;
\item otherwise, if there exists $j > c$ such that $y_j = 1$ and for all
$i$ such that $c<i<j$, $x_i = 1$, then $\rho(c) = q_2$;
\item otherwise, $\rho(c) = q_4$.
\end{itemize}
We show that $\rho$ is a run of $\A_{\Until}$, that it is unique, and
that its output is indeed the truth word of $\varphi\Until\psi$ on $w$.

By definition, $\rho$ ends in $q_4$, which is the final state of
$\A_{\Until}$.  Let $c\in \hat J$.  If $\rho(c)$ is $q_0$,  $q_1$
or $q_3$, then $c = c_j^-$ for some $j$ and the successor transition from $c$
to the next cut is allowed by the automaton.  If $\rho(c) = q_2$, and $c =
c_j^-$ for some $j$, then $x_j = 1$ and $y_j = 0$, and $\rho(c_j^+)$ is $q_0$,
$q_1$ or $q_2$.  If $\rho(c_j^-) = q_4$, then similarly $x_j = 1$ and $y_j =
0$, and $\rho(c_j^+)$ can be $q_3$ or $q_4$.  Every successor transition in
$\rho$ is thus allowed by $\A_{\Until}$.

We now need to show the same for limit transitions.
If a left limit transition leads to a cut $c$, then either $\psi$ is true
arbitrarily close to the left of $c$ (in which case the corresponding limit
set contains $q_0$ or $q_1$), or it is always false (and the limit set is
$\{q_2\}$ or a subset of $\{q_3,q_4\}$).  If the limit set contains $q_0$,
$q_1$ or $q_3$, any state for $c$ is allowed.  If it is $\{q_2\}$, the cut $c$
can't be labelled by $q_3$ or $q_4$ without violating the definition of
$\rho$.  Conversely, if the limit set is $\{q_4\}$, $\rho(c)$ is necessarily
$q_3$ or $q_4$.

Let's now consider a right limit transition starting at a cut $c$.  The label
of this cut can only be $q_2$ or $q_4$.  In the first case, $\varphi$ must
be true everywhere in the limit set, which is thus a subset of $\{q_0,q_2\}$.
In the second case, either $\varphi$ is false infinitely often in the limit,
or $\psi$ is always false.  This means that the limit set contains $q_1$ or
$q_3$, or is restricted to $\{q_4\}$.

We now show that a run on $\A_{\Until}$ is uniquely determined by the input
word.
Let $\gamma$ a run of $\A_{\Until}$ on $x,y$.  Because of the constraints on
the successor transitions, a cut $c$ is labelled by $q_0$, $q_1$ or $q_3$ in
$\gamma$ if and only if it is labelled by the same state in $\rho$.

Let's suppose that a cut $c$ is labelled by $q_2$ in $\gamma$.  Since $q_2$ is
not final, there exists $c' > c$ labelled by some other state.  If there is a
first such cut, its label is necessarily $q_0$ or $q_1$ (by a successor
transition from $q_2$ or a limit transition from $\{q_2\}$).  Otherwise, there
is a transition of the form $q_2\rightarrow \{q_0\}$ or $q_2\rightarrow
\{q_0,q_2\}$.  In both cases, $c$ satisfies the condition for cuts labelled by
$q_2$ in the definition of $\rho$.
A similar argument shows that a cut labelled by $q_4$ in $\gamma$ has the same
label in $\rho$.  The run of $\A_{\Until}$ on a given input word is thus
unique.

The last step is to show that the output word is really the truth word of
$\phi\Until\psi$.
Let $j$ an element of $J$.
First, suppose that $w,j\models \phi\Until\psi$.  If $j$ has a successor $k$,
and $\psi$ is true at $k$, then $y_k = 1$, and $\A_{\Until}$ outputs $1$ at
position $j$.  Otherwise, there exists $k>j$ such that $w,k\models\psi$ (i.e.
$y_k = 1$), and $x_\ell = 1$ whenever $j<\ell<k$.  Thus, $c_j^+$ is labelled
with $q_2$, and $\A_{\Until}$ once again outputs $1$ at position $j$.

Similarly, if $w,j\not\models\phi\Until\psi$, there are two cases.
In the first case, $j$ has a successor $k$, and $x_k = y_k = 0$.  This means
that $c_j^+$ is labelled by $q_3$, so the output at position $j$ is $0$.
In the last case, $c_j^+$ is labelled by $q_4$, and once again $\A_\Until$
outputs $0$.
\qed\end{proof}

\subsection{Automaton for the future Stavi connective ($\Until'$)}

Let's recall that $\phi\Until'\psi$ holds at position $i$ if there exists a
gap $c > i$ such that $\phi$ holds at every position $i < j < c$, the property
$\psi$ holds at every position in some interval starting at $x$, and
$\neg\phi$ holds at positions arbitrarily close to $c$ to the right.

The central point in this definition is the gap $c$, which corresponds to
state $\qhole$ in the automaton.  States $\qzero$, $\qone$ and $\qtwo$ follow
the positions, before $\qhole$, where the formula holds.  States $\qfour$,
$\qfive$, $\qsix$, $\qseven$, $\qeight$ follow the positions where the formula
doesn't hold.
If a run reaches $\qzero$, $\qone$ or $\qtwo$, it has to leave this region
through $\qhole$, and all successor transitions until then have input label
$(1,0)$ or $(1,1)$.  The structure of this automaton is depicted in
Figure~\ref{fig:stavi}.
All states except $\qhole$ and $\qnine$ are initial; $\qeight$ and $\qnine$
are final.  Transitions from $\qone$ and $\qseven$ have input label $(1,1)$,
transitions from $\qtwo$ and $\qsix$ have input label $(1,0)$, transitions from
$\qfour$ have input label $(0,0)$, and transitions from $\qfive$ have input
label $(0,1)$.  The output is $1$ for transitions to $\qzero$, $\qone$ and
$\qtwo$, and $0$ for transitions to $\qfour$, $\qfive$, $\qsix$, $\qseven$ and
$\qeight$.

\begin{figure}
\begin{tabular}{ll}{
%
%
%
%

\setlength{\unitlength}{.3ex}

\begin{picture}(150,70)(10,-100)\nullfont
\node[NLangle=0.0](q2)(40.0,-32.0){$\qtwo$}

\node[NLangle=0.0](q1)(88.0,-32.0){$\qone$}

\node[NLangle=0.0](q0)(64.0,-52.0){$\qzero$}

\node[NLangle=0.0](qhole)(144.0,-32.0){$\qhole$}

\node[NLangle=0.0](q7)(44.0,-156.0){$\qseven$}

\node[NLangle=0.0](q6)(108.0,-156.0){$\qsix$}

\node[NLangle=0.0](q5)(44.0,-100.0){$\qfive$}

\node[NLangle=0.0](q4)(108.0,-100.0){$\qfour$}

\node[NLangle=0.0,Nmarks=f](q8)(144.0,-104.0){$\qeight$}

\node[NLangle=0.0,Nmarks=f](q9)(144.0,-64.0){$\qnine$}

\drawedge[curvedepth=6.0](q5,q7){}

\drawedge[curvedepth=6.0](q7,q5){}

\drawedge[curvedepth=4.0](q4,q5){}

\drawedge[curvedepth=4.0](q5,q4){}

\drawedge[curvedepth=4.0](q4,q6){}

\drawedge[curvedepth=4.0](q6,q4){}

\drawedge[curvedepth=4.0](q6,q7){}

\drawedge[curvedepth=4.0](q7,q6){}

\drawedge[curvedepth=4.0](q5,q6){}

\drawedge[curvedepth=4.0](q6,q5){}

\drawedge[curvedepth=8.0](q7,q4){}

\drawedge[curvedepth=8.0](q4,q7){}

\drawedge[curvedepth=8.0](q2,q1){}

\drawedge[curvedepth=8.0](q1,q2){}

\drawedge[curvedepth=-4.0](q2,q0){}

\drawedge[curvedepth=8.0](q1,q0){}

\drawloop[loopangle=-180.0,ELpos=70](q7){1,1/0}

\drawloop[loopangle=180.0,ELpos=30](q5){0,1/0}

\drawloop[loopangle=-0.0](q6){1,0/0}

\drawloop[loopangle=45,ELpos=60](q4){0,0/0}

\drawloop[loopangle=0.0](q1){1,0/1}

\drawloop[loopangle=180.0,ELpos=30](q2){1,1/1}

\drawedge[curvedepth=8.0](q4,q0){}

\drawedge[curvedepth=8.0](q5,q0){}

\drawedge[curvedepth=-4.0](q6,q8){}%

\drawedge[curvedepth=-4.0](q4,q8){}%

\drawedge[curvedepth=-12.0](q5,q8){}%

\drawedge[curvedepth=-8.0](q7,q8){}%

\drawedge[curvedepth=8.0](q5,q2){}

\drawedge[curvedepth=-12.0](q5,q1){}

\drawedge[curvedepth=-8.0](q4,q1){}

\drawedge[curvedepth=15.0](q4,q2){}





\end{picture}

}&
\begin{minipage}{6cm}{
\begin{itemize}
\item $P\rightarrow \qzero,\qone,\qtwo$ if $P\cap \{\qfour,\qfive\} \neq
\emptyset$ or $P\subseteq \{\qzero,\qone,\qtwo\}$
\item $P\rightarrow \qhole$ if $P\subseteq \{\qzero,\qone,\qtwo\}$
\item $P\rightarrow \qfour,\qfive,\qsix,\qseven$ if $P\not\subseteq
\{\qzero,\qone,\qtwo\}$
\item $P\rightarrow \qeight$ if $P\cap \{\qfour,\qfive\} \neq \emptyset$
\item $P\rightarrow \qnine$ if $P\cap \{\qfour,\qfive\} = \emptyset$ and
$P\not\subseteq\{\qzero,\qone,\qtwo\}$
\item $\qzero\rightarrow P$ if $P\subseteq \{\qzero,\qone,\qtwo\}$
\item $\qhole\rightarrow P$ if $P\cap \{\qone,\qfour,\qsix\} = \emptyset$ and $\qfive\in P$
\item $\qeight\rightarrow P$ if $P\cap \{\qfour,\qfive,\qsix,\qseven\} \neq \emptyset$
\item $\qnine\rightarrow P$ if $P\cap \{\qfour,\qfive,\qsix,\qseven\} \neq
\emptyset$ and either $P\cap \{\qfour,\qfive\}=\emptyset$ or $P$ intersects
$\{\qone,\qfour,\qsix\}$
\end{itemize}
}\end{minipage}
\end{tabular}
\caption{Automaton for the future Stavi operator}
\label{fig:stavi}
\end{figure}

We define a labelling $\rho$ of the cuts of a word $w$ on $\{0,1\}²$ using the
states of the automaton in the following way:
\begin{itemize}
\item $\qzero$ has no successor, $\phi\Until'\psi$ is true
\item $\qone$ has an outgoing transition labelled $(1,0)$, $\phi\Until'\psi$ is
true
\item $\qtwo$ has an outgoing transition labelled $(1,1)$, $\phi\Until'\psi$ is
true
\item $\qhole$ is a gap, $\phi\Until'\psi$ is true before it and false
afterwards
\item $\qfour$ has an outgoing transition labelled $(0,0)$, $\phi\Until'\psi$ is
false
\item $\qfive$ has an outgoing transition labelled $(0,1)$, $\phi\Until'\psi$ is
false
\item $\qsix$ has an outgoing transition labelled $(1,0)$, $\phi\Until'\psi$ is
false
\item $\qseven$ has an outgoing transition labelled $(1,1)$, $\phi\Until'\psi$
is false
\item $\qeight$ has no successor, $\phi$ doesn't hold in the left limit if it
has no predecessor, and $\phi\Until'\psi$ is false
\item $\qnine$ is a gap or is the last cut, $\phi\Until'\psi$ is false, and
$\phi$ is true in some interval to the left
\end{itemize}


\begin{lemma}
$\rho$ defines the unique run of the automaton on its input word.  If the
input is $(\truth_{\phi}(w), \truth_{\psi}(w))$ for some word $w$, then the
output of this run is $\truth_{\phi\Until'\psi}(w)$.
\end{lemma}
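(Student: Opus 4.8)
The plan is to follow the same three-part structure as in the proof for $\Until$: first verify that the labelling $\rho$ satisfies every constraint of an accepting run, then show it is the unique run on the given input, and finally read off its output and check that it equals $\truth_{\phi\Until'\psi}(w)$.

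For run validity I would begin with the extremities. The cut $c_{\min}$ is neither a gap nor the last cut, so $\rho(c_{\min})$ is not $\qhole$ or $\qnine$ and is therefore initial; and since no gap lies beyond $c_{\max}$, the formula is false there, forcing $\rho(c_{\max})\in\{\qeight,\qnine\}$, both final. For the successor transitions, at a cut $c=c_j^-$ the input label $(x_j,y_j)$ is exactly the one prescribed by $\rho(c)$, and a case analysis on the target $\rho(c_j^+)$ against the edges of Figure~\ref{fig:stavi} shows each step is permitted, using the fact that the $\phi$-true region $\{\qzero,\qone,\qtwo\}$ can be left only through $\qhole$. The delicate part is the limit transitions. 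For a left limit into a cut $c$ I would distinguish whether the limit set stays inside $\{\qzero,\qone,\qtwo\}$, whether it meets the ``false'' region, or whether it is exactly what is needed to reach $\qhole$; the condition $P\subseteq\{\qzero,\qone,\qtwo\}$ for left limits into $\qhole$ reflects condition~(1), namely that the run remains in the $\phi$-true region up to the witnessing gap. For a right limit the source cut can only be labelled $\qzero$, $\qhole$, $\qeight$ or $\qnine$; the crucial case is $\qhole$, whose right-limit condition $P\cap\{\qone,\qfour,\qsix\}=\emptyset$ together with $\qfive\in P$ encodes conditions~(3) and~(2) respectively---that $\psi$ is eventually always true to the right of the gap, and that $\neg\phi$ recurs arbitrarily close to it.

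For uniqueness I would argue as in the $\Until$ case. The successor-transition input labels force any run $\gamma$ to agree with $\rho$ on every cut having a successor, since the outgoing label together with the truth value determines the state; the only remaining freedom is at cuts with no successor and at $\qhole$. These I would pin down using finality and the limit data: a cut carrying $\qhole$ in $\gamma$ must be a gap whose left limit lies in $\{\qzero,\qone,\qtwo\}$ and whose right limit forces the false region, matching $\rho$; and a no-successor cut labelled $\qeight$ or $\qnine$ is determined by the same limit information, which separates $\qeight$ (where $\neg\phi$ holds in the left limit) from $\qnine$ (where $\phi$ persists to the left). Finally, the output at position $j$ is read from $\rho(c_j^+)$, which belongs to $\{\qzero,\qone,\qtwo\}$ exactly when $\phi\Until'\psi$ holds at the cut $c_j^+$; under the strict semantics this coincides with truth at position $j$, yielding the claimed truth word.

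The step I expect to be the main obstacle is the pair of limit transitions at $\qhole$. The entire content of the Stavi operator is compressed into the left- and right-limit conditions at the gap, so the heart of the argument is to check, in both directions, that a gap is labelled $\qhole$ precisely when it witnesses $\phi\Until'\psi$, and to separate this cleanly from the false sink states $\qeight$ and $\qnine$, whose role is to absorb runs in which no admissible witnessing gap occurs to the right.
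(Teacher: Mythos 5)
Your overall plan coincides with the paper's proof: check $\rho$ against the successor and limit transitions case by case, observe that right limits can only start from $\qzero$, $\qhole$, $\qeight$ or $\qnine$, note that the left-limit condition $P\subseteq\{\qzero,\qone,\qtwo\}$ into $\qhole$ and the right-limit condition $P\cap\{\qone,\qfour,\qsix\}=\emptyset$ with $\qfive\in P$ out of $\qhole$ encode conditions (1), (3) and (2) of the Stavi semantics, then settle uniqueness ``by the same method as for $\A_{\Until}$'' and read the output off the target state: output $1$ at position $j$ exactly when $\rho(c_j^+)\in\{\qzero,\qone,\qtwo\}$, which under the strict semantics matches truth at $j$. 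All of this is exactly what the paper does (the paper is, if anything, terser than you on uniqueness).

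There is, however, one step in your uniqueness argument that fails as stated: you claim that the successor-transition input labels force any run $\gamma$ to agree with $\rho$ at every cut having a successor, so that the only remaining freedom sits at no-successor cuts and at $\qhole$. But the automaton deliberately duplicates the label structure across the two regions: $\qone$ and $\qsix$ both have outgoing input label $(1,0)$, and $\qtwo$ and $\qseven$ both have $(1,1)$, so the input label at a successor cut does \emph{not} determine whether $\gamma$ sits in the true or the false region. Your appeal to ``the truth value'' to break the tie is circular: the input alphabet carries only $(\truth_{\phi}(w),\truth_{\psi}(w))$, and showing that a run's implicit truth claims are correct is precisely what uniqueness must establish. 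The missing piece is the block argument that the paper's phrase ``the same method as for the $\A_{\Until}$'' refers to --- there the analogous ambiguity is $q_2$ versus $q_4$, which likewise share the input label $(1,0)$. Concretely: if $\gamma$ labels some cut in $\{\qzero,\qone,\qtwo\}$, then since these states are not final and every successor or limit continuation from within this region stays inside it except the left-limit transitions $P\rightarrow\qhole$ with $P\subseteq\{\qzero,\qone,\qtwo\}$, the run must eventually exit through $\qhole$ at a gap whose limit conditions certify (1)--(3), so $\phi\Until'\psi$ is indeed true at that cut and $\gamma$ agrees with $\rho$; symmetrically, one must show that a cut labelled in $\{\qfour,\qfive,\qsix,\qseven\}$ cannot be one where the formula holds, e.g.\ a run sitting in $\{\qsix,\qseven\}$ up to a genuine witnessing gap would be forced into $\qnine$ there, whereas the right-limit transitions from $\qnine$ exclude exactly the limit sets (containing $\qfive$, disjoint from $\{\qone,\qfour,\qsix\}$) that conditions (2) and (3) impose. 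Only after this region argument do the limit conditions pin down $\qzero$, $\qhole$, $\qeight$ and $\qnine$ at no-successor cuts in the way you describe.
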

\begin{proof}
We first show that $\rho$ is a run.  Successor transitions correspond almost
directly to the definitions of the labelling $\rho$, so let's look at limit
transitions.  For left limits, the following cases need to be considered:
\begin{itemize}
\item if a transition $P\rightarrow \qzero$ is taken at a cut $c$, then either
$\phi$ is true in the limit, and so $\phi\Until'\psi$ is too, and $P\subseteq
\{\qzero,\qone,\qtwo\}$, or it's not, and either $\qfour$ or $\qfive$ appear
in the limit
\item the same reasoning applies for $\qone$ and $\qtwo$
\item if $c$ is labelled $\qhole$ then the incoming transition has to come
from a subset of $\{\qzero,\qone,\qtwo\}$ since $\phi\Until'\psi$ is true in
the limit.
\item if a transition $P\rightarrow \qfour$ is used, then $\phi\Until'\psi$ is
not true in the limit (otherwise it would still be true), and so
$P\not\subseteq\{\qzero,\qone,\qtwo\}$; the same applies for $\qfive$,
$\qsix$, $\qseven$, $\qeight$ and $\qnine$
\item if $c$ is a left limit and is labelled $\qeight$ then the incoming
transition comes from a set $P$ intersecting $\{\qfour,\qfive\}$ because
$\neg\phi$ is repeated
\item if $c$ is labelled $\qnine$ then $\qfour$ and $\qfive$ can't appear in
the left limit set ($\phi$ is true)
\end{itemize}
If $c$ is a right limit cut, it can only be labelled 
$\qzero$, $\qhole$, $\qeight$ or $\qnine$.
Here are the possible right-limit transitions:
\begin{itemize}
\item if a right-limit cut $c$ is labelled $\qzero$, the
limit transition has to go to a subset of $\{\qzero,\qone,\qtwo\}$ since
$\phi\Until'\psi$ holds in the limit
\item if $c$ is labelled with $\qhole$, the limit transition to its right
leads necessarily to a set $P$ not including $\qone$, $\qfour$ and $\qsix$
since $\psi$ is always true, and including $\qfive$ because $\neg\phi$ is
repeated
\item if $c$ is labelled $\qeight$ or $\qnine$, the right limit set can't be a
subset of $\{\qzero,\qone,\qtwo\}$ otherwise $c$ would have been labelled
$\qzero$
\item if $c$ is labelled $\qnine$ we have the additional condition that
either $\phi$ holds in the limit (and neither $\qfour$ nor $\qfive$ appears)
or $\psi$ doesn't (and one of $\qone$, $\qfour$ and $\qsix$ is in the limit)
\end{itemize}

The labelling of cuts defined above is thus a path of the automaton, and we
only need to show that it's the only one, using the same method as for the
$\A_{\Until}$.  Moreover, the definition of $\rho$ means that the output
is $1$ whenever $\phi\Until'\psi$ holds, and $0$ at all other positions.%
\qed\end{proof}

\subsection{Construction of $\A_{\varphi}$}

Now that we have the basic blocks for our construction, we can build an
automaton for any formula $\varphi$.  If $\varphi$ is an atomic proposition
$p$, then we have seen how to build $\A_p$ in the previous section.  If
$\varphi = \neg\psi$, then $\A_{\varphi} = \A_{\neg}\circ\A_{\psi}$.
If $\varphi = \psi_1\vee\psi_2$, then $\A_{\varphi} = \A_{\vee}\circ
(\A_{\psi_1}\times \A_{\psi_2})$.  If $\varphi = \psi_1\Until\psi_2$, then
$\A_{\varphi} = \A_{\Until}\circ(\A_{\psi_1}\times\A_{\psi_2})$.
The same can be done for $\Until'$ and for the past connectives.

The number of states of the resulting automaton is the product of the number
of states of all the elementary automata, and is thus exponential in the size
of the formula.  The actual size of the automaton includes limit transitions,
so can be doubly exponential in the size of the formula, if those transitions
are represented explicitly.

To check whether the formula $\phi$ is satisfiable by a model which is
recognized by an automaton $\B$, we can compute the product of the automaton
$\A_{\phi}$ with $\B$, and check whether a transition where $\A_{\phi}$
outputs $1$ is accessible and co-accessible.  This ensures that there exists a
successful run of the product automaton going through that transition, meaning
that the corresponding input word is accepted by $\B$ and there is a position
where $\phi$ holds.  This concludes the proof of Corollary~\ref{cor:sat}.


\section{Discussion}
\label{s:discuss}

\noindent \textbf{Logical characterization of automata.} We have shown that
any LTL, and thus $\FO$, formula can be represented as a non-ambiguous
automaton with output.  But one can also build such an automaton where the
output is the truth word of a property which can't be expressed as a
first-order formula.  The automaton shown on Figure~\ref{fig:gap} outputs 1
whenever ``there is a gap somewhere in the future'' is true; that formula
can't be expressed in $\FO$.  It would be interesting to find a logical
characterisation of the properties that can be expressed using such automata.

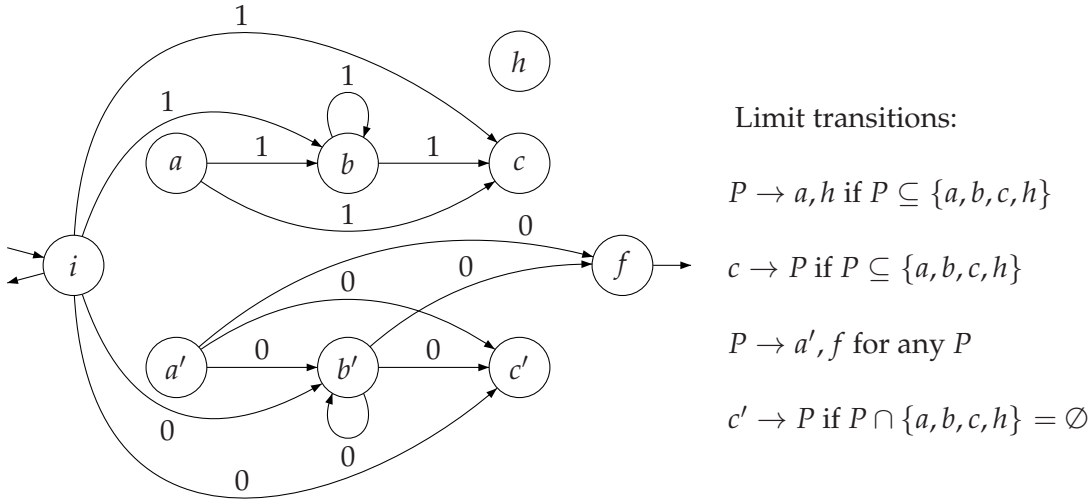
\begin{figure}
\subfigure{
\setlength{\unitlength}{.5ex}
\begin{picture}(80,65)(-10,-65)
\node[Nmarks=if,iangle=165,fangle=195](i)(0,-30){$i$}
\node[Nmarks=f](f)(80,-30){$f$}
\node(a)(15,-15){$a$}
\node(b)(40,-15){$b$}
\node(c)(65,-15){$c$}
\node(h)(65,0){$h$}
\node(a')(15,-45){$a'$}
\node(b')(40,-45){$b'$}
\node(c')(65,-45){$c'$}

\drawloop[loopdiam=6](b){1}
\drawloop[loopdiam=6,loopangle=270](b'){0}
\drawedge[curvedepth=15](i,b){1}
\drawbpedge(i,90,50,c,135,30){1}
\drawedge(a,b){1}
\drawedge[curvedepth=-10](a,c){1}
\drawedge(b,c){1}
\drawedge[curvedepth=-15,ELside=r](i,b'){0}
\drawbpedge(i,270,50,c',225,30){0}
\drawedge(a',b'){0}
\drawedge[curvedepth=10](a',c'){0}
\drawedge(b',c'){0}

\drawedge[curvedepth=10,ELpos=80](a',f){0}
\drawedge[curvedepth=5](b',f){0}
\end{picture}
}
\subfigure{
\begin{picture}(40,50)(0,-50)
\put(20,0){ Limit transitions:}
\put(20,-10){$P \rightarrow a,h$ if $P\subseteq \{a,b,c,h\}$}
\put(20,-20){$c \rightarrow P$ if $P\subseteq \{a,b,c,h\}$}
\put(20,-30){$P \rightarrow a',f$ for any $P$}
\put(20,-40){$c'\rightarrow P$ if $P\cap \{a,b,c,h\} = \emptyset$}

\end{picture}
}
\caption{Automaton checking whether a gap exists in the future}
\label{fig:gap}
\end{figure}

\noindent \textbf{Computational complexity.}  The exact complexity of the
satisfiability problem for LTL on arbitrary orderings remains open.  We give a
2\expspace procedure to compute an automaton from a formula, whose emptiness
can then be checked efficiently.  A classical optimization in similar problems
is to compute the automaton on the fly, which saves a lot of complexity, so an
algorithm using this technique for LTL on arbitrary orderings would be
interesting.

\noindent \textbf{Expressive power.}  On finite and $\omega$-words, LTL
restricted to the unary operators ($\neXt$, $\Eventually$, and their past
counterparts) is equivalent to first-order logic restricted to two variables,
$\FO^2(<,+1)$~\cite{DBLP:journals/iandc/EtessamiVW02}.  Restricting even
further to $\Eventually$ and its reverse, we get a logic expressively
equivalent to $\FO^2(<)$.  In the case of finite words, $\FO^2(<)$ corresponds
to ``partially ordered'' two-way automata~\cite{SchThVo2002}.  The proof of
equivalence between unary temporal logic and $\FO^2$ can be easily extended to
the case of arbitrary linear orderings.  It would be interesting to find such
a correspondence for arbitrary orderings as well, and to see if these
restrictions provide lower complexity results.

\noindent \textbf{Mosaics technique.}  In his work on LTL($\Until$), Reynolds
uses ``mosaics'' to keep track of the subformulas that need to be satisfied in
particular intervals, and to find a decomposition that shows the
satisfiability of the initial formula.  Unfortunately it is not clear if and
how this can be extended to handle a larger fragment of the logic.

\section{Conclusion}

We investigate linear temporal order with \until, \since, and the Stavi
connectives over general linear time, and its relationship with automata over
linear orderings.  We provide a translation from LTL to a class of
non-ambiguous automata with output, giving a 2\expspace procedure to decide
satisfiability of a formula in any rational subclass.

This leaves a number of immediate questions, starting with the actual
complexity for the satisfiability problem for LTL, but also for some of its
fragments, where some operators are excluded.  While the full class of
automata over linear orderings is not closed under
complementation~\cite{DBLP:conf/csr/BedonBCR08}, it might still be possible to
find a logical characterization for some interesting subclasses.

\bibliographystyle{plain}
{\small\bibliography{bib}}
\end{document}